\begin{document}

\title{An asymmetric primitive based on the Bivariate Function Hard Problem}

\author{Muhammad Rezal Kamel Ariffin \inst{1,}\inst{2}}
\institute{Al-Kindi Cryptography Research Laboratory, Institute for
Mathematical Research, Universiti Putra Malaysia (UPM) \and
Department of Mathematics, Faculty of Science, Universiti Putra
Malaysia (UPM), Selangor, Malaysia \newline \textrm{ }\newline
\email{rezal@putra.upm.edu.my}}

\maketitle

\begin{abstract}
The Bivariate Function Hard Problem (BFHP) has been in existence
implicitly in almost all number theoretic based cryptosystems. This
work defines the BFHP in a more general setting and produces an
efficient asymmetric cryptosystem. The cryptosystem has a complexity
order of $O(n^2)$ for both encryption and decryption.
\end{abstract}



\section{Introduction} In Section 2 of this work, we define the
Bivariate Function Hard Problem (BFHP) and illustrate its existence
within the RSA hard problem. We then proceed to produce an
asymmetric cryptosystem in Section 3 that thoroughly utilizes the
BFHP concept. In Section 4 we produce a table of comparison between
known asymmetric algorithms and the algorithm introduced in this
work. We conclude in Section 5.

\section{Bivariate Function Hard Problem}

The following proposition gives a proper an analytical description
of the ``Bivariate Function Hard Problem" (BFHP).

\begin{proposition}
Let $F(x_{1},x_{2},...,x_{n})$ be a multivariate one-way function
that maps $F:\mathbb{Z}^{n}\rightarrow
\mathbb{Z}^{+}_{(2^{n-1},2^{n}-1)}$. Let $F_{1}$ and $F_{2}$ be such
functions (either identical or non-identical) such that
$A_{1}=F_{1}(x_{1},x_{2},...,x_{n})$,
$A_{2}=F_{2}(y_{1},y_{2},...,y_{n})$ and gcd$(A_{1},A_{2})=1$. Let
$u,v \in \mathbb{Z}^{+}_{(2^{m-1},2^{m}-1)}$. Let
\begin{equation}
G(u,v)=A_{1}u+A_{2}v
\end{equation}
If at minimum $m-n-1=129$, it is infeasible to determine $(u,v)$
from $G(u,v)$. Furthermore, $(u,v)$ is unique for $G(u,v)$ with high
probability.
\end{proposition}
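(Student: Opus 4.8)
The plan is to treat equation (1) as a linear Diophantine equation in the two unknowns $u,v$ and to study its solution set inside the prescribed box $[2^{m-1},2^m-1]^2$. Since $\gcd(A_1,A_2)=1$, B\'ezout's identity supplies a particular pair $(u_0,v_0)$ with $A_1u_0+A_2v_0=G(u,v)$, and the complete set of integer solutions is
\begin{equation}
u=u_0+A_2t,\qquad v=v_0-A_1t,\qquad t\in\mathbb{Z}.
\end{equation}
Thus every admissible pair is obtained from a single one by sliding along the vector $(A_2,-A_1)$, and the entire argument reduces to counting the integers $t$ for which both coordinates remain in $[2^{m-1},2^m-1]$.

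For the uniqueness assertion I would impose the two box constraints on this parametrisation. The constraint on $u$ confines $t$ to an interval of length $(2^{m-1}-1)/A_2$, the constraint on $v$ to one of length $(2^{m-1}-1)/A_1$, and the admissible $t$ form their intersection, of length at most $(2^{m-1}-1)/\max(A_1,A_2)$. Because $F_1,F_2$ output $n$-bit integers we have $\max(A_1,A_2)\ge 2^{n-1}$, so consecutive solutions are separated by roughly $2^{n}$ in each coordinate while the box has side roughly $2^{m-1}$. Hence the number of admissible $t$ is governed by the ratio $2^{m-1}/2^{n}\approx 2^{m-n-1}$, and the heart of the proof is to show that, over the (one-way, hence heuristically random) outputs $A_1,A_2$ and over the choice of $(u,v)$, this count collapses to one with overwhelming probability.

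This last step is where I expect the real difficulty to lie. Read literally, the separation $2^{n}$ is \emph{smaller} than the box width $2^{m-1}$ exactly in the regime $m-n-1=129$ mandated by the hypothesis, so a naive count suggests many sibling solutions rather than a single one; reconciling this with the stated uniqueness is the crux. I would therefore make the probabilistic model fully explicit, fixing precisely which quantities are random and under what distribution, and then bound the \emph{expected} number of additional in-box solutions by the interval length times the probability that a lattice translate $(u_0+A_2t,\,v_0-A_1t)$ re-enters the box. The obstacle is that this expectation is not automatically negligible in the given regime, so uniqueness as phrased must be supported either by restricting the effective range from which a legitimate $(u,v)$ is drawn (forcing the intersection length below $1$) or by invoking the trapdoor and structural constraints of the Section 3 construction that single out the intended pair; establishing one of these rigorously is the main gap to close.

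Finally, infeasibility follows from the very same count read in the opposite direction. An adversary without the trapdoor can certainly produce \emph{some} solution of (1) by the extended Euclidean algorithm, but cannot distinguish the intended $(u,v)$ among the candidates compatible with $G(u,v)$; equivalently, the $u$-coordinate must be located within an interval containing at least $2^{m-n-1}\ge 2^{129}$ feasible positions. Since $2^{129}$ exceeds the conventional $2^{128}$ work bound, exhaustive determination is infeasible, and the one-wayness of $F_1,F_2$ blocks any shortcut through the internal structure of $A_1$ and $A_2$; assembling these estimates into a clean statement completes the proof.
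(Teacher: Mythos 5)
Your parametrisation of the solution set and your infeasibility count coincide with the paper's own proof: the paper writes the general solution $u = u_0 + A_2 t$, $v = v_0 - A_1 t$, counts roughly $2^{m-n-2} = 2^{128}$ integers $t$ compatible with $2^{m-1} < u < 2^m - 1$, and concludes that locating the correct $t$ is infeasible. Your version is in fact slightly sharper, since you impose the two box constraints jointly (the paper handles $u$ and then says the same holds for $v$) and you make the correct reading of the hardness explicit: a B\'ezout pair is trivial to produce, and the obstruction is that $G(u,v)$ does not single out the intended pair among the $\ge 2^{129}$ feasible ones.

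The divergence is on uniqueness, and there the ``main gap'' you flag is not a defect of your argument but of the proposition itself. The paper disposes of uniqueness in two lines: a collision forces $A_2 \mid (u_1 - u_2)$ since $\gcd(A_1,A_2)=1$, and the probability that $v_1 - v_2 = A_1(u_1-u_2)/A_2$ is an integer is declared to be $2^{-n}$. Your counting shows exactly why this heuristic fails in the mandated regime $m-n-1=129$: the divisibility is not a rare event over the box, because the translates $(u + A_2 t,\, v - A_1 t)$ satisfy it automatically, and for all but a negligible (about $2^{-128}$) fraction of legitimate pairs the translate with $t=1$ already lies inside $(2^{m-1}, 2^m-1)^2$; generically there are $\Theta(2^{m-n-1})$ sibling solutions, not one. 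Moreover, the two halves of the proposition are mutually exclusive, so no choice of probabilistic model can rescue both: the feasible $t$ form the intersection of two intervals whose endpoints the adversary can compute from $G$, $A_1$, $A_2$ by the extended Euclidean algorithm, so if the in-box solution were unique, that intersection would contain a single integer and the adversary would simply read it off --- uniqueness would make the problem easy, while infeasibility (in the information-theoretic sense you articulate) requires massive non-uniqueness. You were therefore right not to manufacture a proof of this half; the honest conclusion is that uniqueness as stated is false, the paper's argument for it is fallacious, and any repair must either shrink the box below the lattice spacing $\approx 2^n$ (destroying the hardness claim) or reinterpret uniqueness relative to the additional structure imposed on $u,v$ by the Section 3 construction.
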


\begin{remark}
Before we proceed with the proof, we remark here that the
diophantine equation given by $G(u,v)$ is solved when the parameters
$(u,v)$ are found. That is, the BFHP is solved when the parameters
$(u,v)$ are found.
\end{remark}

\begin{proof}
We begin by proving that $(u,v)$ is unique for each $G(u,v)$ with
high probability. Assume there exists $u_{1}\neq u_{2}$ and
$v_{1}\neq v_{2}$ such that
\begin{equation}
A_{1}u_{1}+A_{2}v_{1} = A_{1}u_{2}+A_{2}v_{2}
\end{equation}
We will then have
$$
Y=v_{1}-v_{2}=\frac{A_{1}(u_{1}-u_{2})}{A_{2}}
$$
Since gcd$(A_{1},A_{2})=1$ and $A_{2}\approx 2^{n}$, then the
probability that $Y$ is an integer is $2^{-n}$.\newline Next we
proceed to prove that to solve the diophantine equation given by
$G(u,v)$ is infeasible to be solved. The general solution for
$G(u,v)$ is given by
\begin{equation}
u=u_{0}+A_{2}t
\end{equation}
\noindent and
\begin{equation}
v=v_{0}-A_{1}t
\end{equation}
\noindent for some integer t. To find $u$ within the stipulated
interval ($u\in (2^{m-1},2^{m}-1)$) we have to find the integer $t$
such that $2^{m-1}<u<2^{m}-1$. This gives
$$
\frac{2^{m-1}-u_{0}}{A_{2}}<t<\frac{2^{m}-1-u_{0}}{A_{2}}.
$$
Then the difference between the upper and the lower bound is
$$\frac{2^{m}-1-2^{m-1}}{A_{2}}=\frac{2^{m-1}-1}{A_{2}}\approx \frac{2^{m-2}}{2^n}=2^{m-n-2}.$$
Since $m-n-1=129$, then $m-n-2=128$. Hence the difference is very
large and finding the correct $t$ is infeasible. This is also the
same scenario for $v$.$\blacksquare$
\end{proof}

\begin{remark}
It has to be noted that the BFHP in the form we have described has
to be coupled with other mathematical considerations upon
$F_{1},F_{2},u,v$ to yield practical cryptographic constructions.
\end{remark}

\begin{definition}
Let the tuple $(M,e,d,p,q)$ be strong RSA parameters. Let $N=pq,
ed\equiv 1 (\textrm{mod }\phi(N))$ and $\phi(N)=(p-1)(q-1)$. From
$C\equiv M^{e}(\textrm{mod }N)$ we rewrite as
\begin{equation}
C(M,j)=M^{e}-Nj
\end{equation}
where $j$ is the number of times $M^e$ is reduced by $N$ until
$C(M,j)$ is obtained. The problem of determining $(M,j)$ from
equation $(5)$ is the RSA BFHP. The pair $(M,j)$ is unique with high
probability for each $C(M,j)$.
\end{definition}

\begin{remark}
With little effort, one can also produce a BFHP for the discrete log
problem (DLP). Analysis could also be done within the framework
given for the RSA-BFHP.
\end{remark}

\noindent The following 3 analytical results gives a clear picture
regarding the RSA-BFHP. All result re-affirms the ``infeasibility"
of trying solve the RSA problem. We also produce a corollary that
may shed some light regarding the RSA problem and integer
factorization.


\begin{lemma}
The RSA BFHP is infeasible to be solved.
\end{lemma}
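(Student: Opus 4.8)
The plan is to prove the lemma by exhibiting the RSA-BFHP as a concrete instance of the BFHP from the preceding Proposition and, simultaneously, by tying its hardness to the classical RSA problem via a two-way reduction. First I would recast equation (5), $C(M,j)=M^{e}-Nj$, in the additive shape of $G(u,v)=A_{1}u+A_{2}v$. Writing $w=M^{e}$, the relation becomes $1\cdot w + N\cdot(-j)=C$, a linear diophantine equation in the unknowns $(w,j)$ with coefficients $A_{1}=1$ and $A_{2}=N$. Here $N\approx 2^{n}$ plays the role of the modulus-sized coefficient, and the admissible ranges for $M$ (a message in the RSA message space) and for the reduction count $j$ play the role of the intervals $(2^{m-1},2^{m}-1)$ in the Proposition. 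The uniqueness of $(M,j)$ claimed in the Definition then follows from the uniqueness-with-high-probability part of the Proposition applied to $(w,j)$, together with the fact that $w$ determines $M$ uniquely when an $e$-th root exists.

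Next I would establish the equivalence between solving the RSA-BFHP and solving the standard RSA problem, so that infeasibility of the latter transfers to the former. In the forward direction, any algorithm that recovers $(M,j)$ from the public data $(C,N,e)$ in particular recovers $M$, which is exactly an $e$-th root of $C$ modulo $N$; thus such an algorithm solves the RSA problem. In the reverse direction, an RSA solver yields $M$, whence the companion value is obtained immediately by
\[
j=\frac{M^{e}-C}{N}=\left\lfloor \frac{M^{e}}{N}\right\rfloor .
\]
Hence the two problems are computationally equivalent, and the RSA-BFHP inherits the presumed infeasibility of computing $e$-th roots modulo a strong RSA modulus.

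To make the infeasibility quantitative rather than purely assumption-based, I would invoke the counting argument from the proof of the Proposition. The solution line of $1\cdot w + N\cdot(-j)=C$ is $w=w_{0}+Nt$, $j=j_{0}+t$ for $t\in\mathbb{Z}$, so the candidates are parametrised by a single integer $t$. Restricting $j$ to its admissible range leaves approximately $2^{128}$ admissible values of $t$, exactly as in the Proposition where the gap between the bounds is $\approx 2^{m-n-2}=2^{128}$. For each candidate one must still test whether $C+Nj$ is a perfect $e$-th power in the message space, so exhaustive search over $j$ is infeasible, and any non-exhaustive shortcut would produce $M$ directly, i.e. solve RSA.

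The hard part will be the reverse-direction rigor: arguing that there is genuinely no structural shortcut that extracts $(M,j)$ without either brute-forcing the $2^{128}$ candidates or solving the RSA problem. Because $w=M^{e}$ is nonlinear in $M$, the clean linear-uniqueness bound of the Proposition governs $(w,j)$ but not $(M,j)$ directly, so the final step is necessarily conditional on the RSA assumption. I would therefore present the result as a reduction establishing that the RSA-BFHP is no easier than the RSA problem, with the $2^{128}$-candidate count ruling out the naive search, and I would flag explicitly that ``infeasible'' here means infeasible under the standard hardness of computing $e$-th roots modulo a strong RSA modulus.
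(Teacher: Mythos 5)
Your core argument --- linearizing via $w=M^{e}$, writing the one-parameter solution family $w=w_{0}+Nt$, $j=j_{0}+t$, and observing that the number of admissible $t$ is exponentially large --- is exactly the paper's own proof of this lemma (the paper sets $X=M^{e}$ and writes $X=X_{0}-Nt$, $j=j_{0}+t$; your signs are in fact the consistent ones). Where you genuinely diverge is the surrounding logical structure. The paper stops at the counting argument and declares infeasibility unconditionally, never addressing whether a non-exhaustive algorithm could exist; you identify precisely that gap and close it by wrapping the counting argument in a two-way reduction to the RSA problem, concluding infeasibility \emph{conditional} on the hardness of extracting $e$-th roots modulo $N$. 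That reduction is not part of the paper's proof of this lemma --- it is the paper's Lemma 2 (RSA problem $\equiv_{p}$ RSA BFHP), stated and proved separately --- so your proposal in effect merges the two lemmas into a single statement that actually supports the word ``infeasible,'' at the price of making it an assumption-based claim rather than the paper's unconditional (but logically weaker, brute-force-only) bound. Two quantitative caveats: in the RSA instance the admissible range of $j$ has size roughly $2^{k(e-1)}$ with $k=1024$ (the paper's count), not the $2^{128}$ you imported from the generic Proposition --- both are astronomically large, so the conclusion is unaffected; and uniqueness of $(M,j)$ cannot be deduced from the Proposition applied to $(w,j)$, since $w=M^{e}$ ranges over an interval of width far exceeding $N$, so the linear equation has many integer solutions in that range and uniqueness really comes from $w$ having to be a perfect $e$-th power of a valid message, i.e.\ from RSA decryption being well defined.
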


\begin{proof}
Let $X=M^{e}$. From
\begin{equation}
C(X,j)=X-Nj
\end{equation}
the general solution is
$$
X=X_{0}-Nt
$$
\noindent and
$$
j=j_{0}+t
$$
for some $t\in \mathbb{Z}$. It is easy to deduce that the correct t
belongs in the interval $(2^{k(e-1)-1},2^{k(e-1)}-1)$. Current RSA
deployment has $k=1024$. Hence, to solve the RSA BFHP is
infeasible.$\blacksquare$
\end{proof}

\begin{lemma}
RSA problem $\equiv_{p}$ RSA BFHP
\end{lemma}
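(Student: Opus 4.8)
The plan is to establish the claimed polynomial-time equivalence by exhibiting two reductions, one in each direction, since the relation $\equiv_{p}$ asserts that each problem can be solved in polynomial time given an oracle for the other. A convenient starting observation is that both problems share the same input data $(N,e,C)$: the RSA problem asks for the preimage $M$, while the RSA BFHP asks for the pair $(M,j)$, and by the defining relation $C(M,j)=M^{e}-Nj$ the quantity $C(M,j)$ is exactly the reduced ciphertext $C\equiv M^{e}\pmod{N}$. Thus each reduction will act as the identity on inputs, with only post-processing of the oracle's output.

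First I would reduce the RSA problem to the RSA BFHP. Given an RSA instance $(N,e,C)$ with $C\equiv M^{e}\pmod{N}$, the quantity $C$ coincides with $C(M,j)$ for the pair with $j=\lfloor M^{e}/N\rfloor$, so feeding $(N,e,C)$ to an oracle solving the RSA BFHP returns $(M,j)$; discarding $j$ yields the sought preimage $M$. This direction is immediate and needs no further computation.

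The converse is the more delicate direction. Given an RSA BFHP instance, I would note that $C(M,j)=M^{e}-Nj$ forces $C(M,j)\equiv M^{e}\pmod{N}$, so $C(M,j)$ is a legitimate RSA ciphertext under the same modulus and exponent; passing it to an RSA oracle recovers $M$. It then remains to recover $j$, which is pinned down by
\begin{equation}
j=\frac{M^{e}-C(M,j)}{N}.
\end{equation}
Because the pair $(M,j)$ is unique with high probability, as recorded in the definition of the RSA BFHP, the recovered pair is with high probability the correct solution.

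The main obstacle I anticipate is entirely in this last step: to compute $j$ one must form the integer $M^{e}$ in full, not merely $M^{e}\bmod N$. When $e$ is large the quotient $j$ itself has roughly $k(e-1)$ bits, exactly the magnitude reflected in the interval $(2^{k(e-1)-1},2^{k(e-1)}-1)$ of the preceding Lemma, so even writing $j$ down is not polynomial in $\log e$; the reduction from the RSA BFHP to the RSA problem is therefore genuinely efficient only in the standard small-exponent regime, where $M^{e}$ has $O(e\log N)$ bits and a single division by $N$ costs $O((e\log N)^{2})$. I would state this small-exponent assumption explicitly. An alternative framing worth mentioning treats $M$ as the sole quantity of cryptographic interest and regards $j$ as redundant data fully determined by $M$; under that reading the two problems are trivially interreducible for any $e$. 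With both reductions in hand, the equivalence $\text{RSA problem}\equiv_{p}\text{RSA BFHP}$ follows.
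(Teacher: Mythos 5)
Your proposal is essentially the paper's own proof: the same two identity-on-inputs reductions, one direction recovering $j=\frac{M^{e}-C}{N}$ after the RSA oracle returns $M$, the other simply discarding $j$ from the BFHP oracle's output. Your caveat about the cost of forming $M^{e}$ as a full integer --- so that $j$ has roughly $k(e-1)$ bits and the reduction is polynomial-time only in the small-exponent regime --- is a genuine subtlety that the paper's proof silently ignores (it merely asserts that $j$ ``is also found''), so your version is in fact more careful than the original.
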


\begin{proof}
From $C\equiv M^{e}(\textrm{mod }N)$ if the RSA problem is solved
then $M$ is found. Hence, $j=\frac{M^{e}-C}{N}$ is also found. Thus,
the RSA BFHP is solved.\newline

From $C(X,j)=X-Nj$, if the RSA BFHP is solved means that $(M,j)$ is
found. Thus, the RSA problem is solved.$\blacksquare$
\end{proof}

\begin{corollary}
Solving RSA BFHP does not imply successful factoring of $N=pq$.
\end{corollary}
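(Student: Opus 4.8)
The plan is to build on the preceding lemma, which gives a polynomial equivalence between the RSA problem and the RSA BFHP. Under that equivalence, a solver for the RSA BFHP applied to a ciphertext $C$ returns a pair $(M,j)$ satisfying the single relation $M^{e}-Nj=C$; my goal is to argue that possession of such a pair does not enable a nontrivial factorization of $N=pq$.

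First I would pin down exactly what information the solver yields. Since $M$ is the unique $e$-th root of $C$ modulo $N$ and $j=(M^{e}-C)/N$ is then forced, an RSA-BFHP solver is nothing more than an oracle for extracting $e$-th roots modulo $N$. Thus the corollary is equivalent to the assertion that an $e$-th-root oracle for $N$ does not reveal a factor of $N$, and I would state the corollary in this reduced form to make the target precise.

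Next I would argue the non-implication. Factoring $N$ requires data such as $\phi(N)$, a multiple of it, or a nontrivial square root of unity modulo $N$; the relation $M^{e}-Nj=C$ supplies none of these, as it is merely one linear identity over $\mathbb{Z}$ relating the already-public quantities $C$, $N$, $e$ to the recovered plaintext, and it holds for the unique $M$ determined by $C$ irrespective of how $N$ splits. I would make this concrete by checking that the natural algebraic manipulations of $M$, $j$, $C$, $N$, $e$ do not isolate $p$ or $q$, so that no reduction from root extraction to factoring is obtained.

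The hard part will be rigor: an unconditional proof that solving the RSA problem never yields the factorization of $N$ would formally separate the RSA problem from integer factorization, which is a well-known open question. Accordingly, the strongest defensible claim is a conditional one, and I would phrase the corollary as asserting that no known reduction carries an $e$-th-root oracle to a factoring algorithm, presenting the absence of extractable factorization data in $(M,j)$ as the supporting evidence rather than claiming to have resolved the RSA-versus-factoring problem.
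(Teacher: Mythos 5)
Your core argument is the paper's own proof. The paper simply observes that solving the RSA BFHP yields the pair $(M,j)$ via $M=\sqrt[e]{C+Nj}$ and $j=\frac{M^{e}-C}{N}$, and then concludes in one sentence that ``it is obvious that the factoring of $N$ was not obtained'' --- which is exactly your observation that the recovered data is a single integer relation among public quantities and the plaintext, with no visible trace of $p$ or $q$. Where you depart from the paper is your final paragraph, and there you are right and the paper is not: the absence of factoring data in the tuple $(M,j)$ does not establish the corollary as stated, because a non-implication claim must rule out \emph{every} algorithm that invokes an RSA-BFHP solver (equivalently, by the paper's Lemma 2, an $e$-th-root oracle), possibly many times on adaptively chosen ciphertexts, to produce a factor of $N$. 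Ruling that out unconditionally would separate the RSA problem from integer factorization, a long-standing open question; the known partial results only exclude restricted (algebraic) reductions. The paper's proof silently conflates ``the solution does not exhibit the factors'' with ``solving does not imply factoring,'' which is precisely the gap you decline to paper over. So: same approach at its core, but your conditional reformulation --- the solution data itself yields no factor, and no reduction from factoring to RSA inversion is known --- is the only defensible reading of the corollary, and your write-up is more careful than the proof it reconstructs.
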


\begin{proof}
From Remark 1, if RSA BFHP is solved then $(M,j)$ is found. That is,
$$
M=\sqrt[e]{C+Nj}
$$
\noindent and
$$
j=\frac{M^{e}-C}{N}.
$$
It is obvious that the factoring of $N$ was not
obtained.$\blacksquare$
\end{proof}

\section{A new asymmetric cryptosystem based on the BFHP}

\subsection{Common values}
This scheme is to facilitate secure communication asymmetrically
between 2 parties namely A (Along) and B (Busu). For both of them
there will 2 sets of public parameters determined pre-communication
and a common $n$-bit prime number. The party that initiates the
communication will utilize the set $G_{1}=(g_{1}, g_{2})$ while the
other party will utilize the set $G_{2}=(g_{3}, g_{4})$. These
public parameters are co-prime to each other and belong in the
interval $(2^{n-1}, 2^{n}-1)$. In fact both parties will have keys
generated by both sets for the eventuality of either initiating
communication or accepting incoming information. In this work we
assume Along is initiating while Busu is accepting secure
information.\newline \textrm{ }\newline

$\bullet$ \textbf{Key Generation by Along -sender}\\
\newline
\indent INPUT: The public prime number $p$, the public sets $G_{1}$ and $G_{2}$.\\
\indent OUTPUT: A public key for sending information $e_{A}$, an
ephemeral private key $d_{A}$ for generating $e_{A}$ and a
secret pair $(\alpha_{1},\alpha_{2})$. \\
\begin{enumerate}
    \item Generate a random private key $d_{A}$ within the interval $(2^{n-1}, 2^{n}-1)$.
    \item Compute $\tilde{\alpha_{1}}\equiv g_{1}d_{A} (\textrm{mod }p)$.
    \item Compute $\tilde{\alpha_{2}}\equiv g_{2}d_{A} (\textrm{mod }p)$.
    \item Generate two random and distinct $n$-bit ephemeral keys
    $k_{A1}$ and $k_{A2}$.
    \item Compute the secret integers
    $\alpha_{1}=\tilde{\alpha_{1}}+k_{A1}p$ and
    $\alpha_{2}=\tilde{\alpha_{2}}+k_{A2}p$. Both $\alpha_{1}$ and
    $\alpha_{2}$ belong in the interval $(2^{2n-1}, 2^{2n}-1)$.
    \item Let $e_{A}=g_{3}\alpha_{1}+g_{4}\alpha_{2}$.
\end{enumerate}

$\bullet$ \textbf{Key Generation by Busu -recipient}\\
\newline
\indent INPUT: The public prime number $p$, the public sets $G_{1}$ and $G_{2}$.\\
\indent OUTPUT: A public key for receiving information $e_{B}$, an
ephemeral private key $d_{B}$ for generating $e_{B}$ and a
secret pair $(\beta_{1},\beta_{2})$. \\
\begin{enumerate}
    \item Generate a random private key $d_{B}$ within the interval $(2^{n-1}, 2^{n}-1)$.
    \item Compute $\tilde{\beta_{1}}\equiv g_{3}d_{B} (\textrm{mod }p)$.
    \item Compute $\tilde{\beta_{2}}\equiv g_{4}d_{B} (\textrm{mod }p)$.
    \item Generate two random and distinct $n$-bit ephemeral keys
    $k_{B1}$ and $k_{B2}$.
    \item Compute the secret integers
    $\beta_{1}=\tilde{\beta_{1}}+k_{B1}p$ and
    $\beta_{2}=\tilde{\beta_{2}}+k_{B2}p$. Both $\beta_{1}$ and
    $\beta_{2}$ belong in the interval $(2^{2n-1}, 2^{2n}-1)$.
    \item Let $e_{B}=g_{1}\beta_{1}+g_{2}\beta_{2}$.
\end{enumerate}

$\bullet$ \textbf{Encryption by Along}\\

\indent INPUT: The public key tuple $(e_{A}, e_{B})$, and the
message \textbf{M} which is $n$-bits long and less than $p$.
\newline
\textrm{ }\newline \noindent OUTPUT: The ciphertext $C$.
\begin{enumerate}
    \item Upon informing Busu of the intention to send secure data,
    Along receives Busu's public key $e_{B}$.
    \item Along then generates $e_{AB}\equiv d_{A}e_{B}\equiv
    (\textrm{mod }p)$.
    \item Along then generates the ciphertext
    $C_{1}=(M+e_{AB})(\textrm{mod }p)$.
    \item Next, Along produces $\textit{sk}=H(e_{AB})$ where $H$ is a collision resistant hash function.
    \item Along will then utilize a symmetric algorithm Enc, to produce
    $C_{2}=Enc_{\textit{sk}}(M)$.
    \item Along will relay $(C_{1}, C_{2}, e_{A})$ to Busu.
\end{enumerate}


$\bullet$ \textbf{Decryption by Busu}\\

\indent INPUT: The private key $d_{B}$ and the tuple $(C_{1}, C_{2},
e_{A})$.
\newline
\textrm{ }\newline \noindent OUTPUT: The message \textbf{M}.
\begin{enumerate}
    \item Upon receiving ciphertext Busu computes $e_{BA}\equiv d_{B}e_{A}(\textrm{mod
    }p)$.
    \item Busu then computes \textbf{M'}=$(C_{1}-e_{BA})(\textrm{mod }p)$.
    \item Busu then produces $\textit{sk}=H(e_{BA})$
    \item Busu then decrypts $C_{2}$ with its corresponding
    symmetric decryption algorithm Dec to produce
    $M=Dec_{\textit{sk}}(C_{2})$.
    \item If \textbf{M'}$\neq M$ then abort.
    \item Else output \textbf{M'} which is the message.
\end{enumerate}

\begin{proposition}
From the above mentioned algorithm $e_{AB}=e_{BA}$.
\end{proposition}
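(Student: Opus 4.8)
The plan is to show that both $e_{AB}$ and $e_{BA}$, once reduced modulo $p$, collapse to one and the same symmetric expression in the private keys $d_A, d_B$ and the public sets $G_1=(g_1,g_2)$ and $G_2=(g_3,g_4)$. The key observation driving the whole argument is that every secret integer $\alpha_i$ and $\beta_i$ was constructed by adding an integer multiple of $p$ to a residue of the form $g_j d$; consequently the lifting terms $k_{A1}p,\,k_{A2}p,\,k_{B1}p,\,k_{B2}p$ are invisible modulo $p$, even though they are precisely what conceals the private data inside $e_A$ and $e_B$ via the BFHP.

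First I would start from the encryption side and expand $e_{AB}\equiv d_A e_B \pmod p$ using the recipient's public key $e_B=g_1\beta_1+g_2\beta_2$. Reducing $\beta_1=\tilde{\beta_1}+k_{B1}p$ and $\beta_2=\tilde{\beta_2}+k_{B2}p$ modulo $p$, and recalling $\tilde{\beta_1}\equiv g_3 d_B$ and $\tilde{\beta_2}\equiv g_4 d_B \pmod p$, this yields
\[
e_{AB}\equiv d_A d_B\,(g_1 g_3 + g_2 g_4) \pmod p.
\]
Next I would carry out the mirror-image computation on the decryption side, expanding $e_{BA}\equiv d_B e_A \pmod p$ with $e_A=g_3\alpha_1+g_4\alpha_2$ and reducing $\alpha_1\equiv g_1 d_A$, $\alpha_2\equiv g_2 d_A \pmod p$, to obtain the identical residue $d_A d_B\,(g_1 g_3 + g_2 g_4)\pmod p$. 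Since both quantities lie in the common residue system modulo $p$ and equal the same residue, I would conclude $e_{AB}=e_{BA}$.

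I do not expect a genuine obstacle here: the statement is an algebraic identity, and the argument is just a pair of symmetric modular reductions. The only point demanding care is bookkeeping, namely verifying that the multiples of $p$ truly drop out and that the cross terms pair up as $g_1 g_3 + g_2 g_4$ symmetrically on both sides, so that no asymmetry between the roles of sender and recipient survives the reduction.
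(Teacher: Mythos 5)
Your proof is correct and follows essentially the same route as the paper: reduce both $d_Ae_B$ and $d_Be_A$ modulo $p$, noting that the lifts $k_{\cdot}p$ vanish, and observe that both collapse to the same symmetric residue. The only cosmetic difference is that you expand all the way to $d_Ad_B(g_1g_3+g_2g_4)$, while the paper stops one step earlier at the congruent expression $\tilde{\alpha_1}\tilde{\beta_1}+\tilde{\alpha_2}\tilde{\beta_2} \pmod p$; you are in fact slightly more careful in noting that equality of residues in a common residue system gives equality of $e_{AB}$ and $e_{BA}$ themselves.
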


\begin{proof}
$e_{AB}\equiv d_{A}e_{B}\equiv
    (\tilde{\alpha_{1}}\tilde{\beta_{1}}+\tilde{\alpha_{2}}\tilde{\beta_{2}})\equiv
    (\tilde{\beta_{1}}\tilde{\alpha_{1}}+\tilde{\beta_{2}}\tilde{\alpha_{2}})\equiv
    d_{B}e_{A}(\textrm{mod }p)=e_{BA}$.$\blacksquare$
\end{proof}

\begin{proposition}
The encryption process as mentioned above is IND-CCA2 secure.
\end{proposition}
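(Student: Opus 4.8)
The plan is to prove IND-CCA2 security through a sequence-of-games reduction in the random oracle model, treating the collision-resistant hash $H$ as a random oracle. Security will rest on two ingredients: the infeasibility of recovering the shared secret $s = e_{AB} = e_{BA}$ from the public data $(e_A, e_B, p, G_1, G_2)$, which is a Diffie--Hellman-type instance of the BFHP guaranteed by Proposition 1, and the IND-CCA2 security of the underlying symmetric scheme $(\mathrm{Enc}, \mathrm{Dec})$. First I would fix a PPT adversary $\mathcal{A}$ and let Game 0 be the genuine experiment: the challenger publishes the keys, answers adaptive decryption queries, receives challenge messages $M_0, M_1$, returns the challenge ciphertext $(C_1^{*}, C_2^{*}, e_A)$ encrypting $M_b$ for a secret bit $b$, answers further queries barring the challenge, and reads off $\mathcal{A}$'s guess $b'$. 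Write $s^{*}$ for the shared secret of the challenge.

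The pivotal step is Game 1, in which the challenger aborts the instant $\mathcal{A}$ queries $H$ at $s^{*}$. In Game 0 the only carriers of information about $b$ are $sk^{*} = H(s^{*})$ inside $C_2^{*}$ and the additive mask $s^{*}$ inside $C_1^{*} = M_b + s^{*}$; forbidding the oracle query at $s^{*}$ is precisely what severs the ciphertext from the message. I would bound the abort probability by a reduction $\mathcal{B}$ that embeds a BFHP instance into $(e_A, e_B)$ and harvests $\mathcal{A}$'s oracle queries as candidate values for $s^{*}$: any abort-inducing query hands $\mathcal{B}$ the shared secret and hence solves the BFHP instance, which Proposition 1 (with $m-n-1 = 129$) declares infeasible. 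Conditioned on no such query, $sk^{*}$ is a fresh uniform key and $s^{*}$ an independent uniform mask, so $C_1^{*}$ is a perfect one-time pad of $M_b$ and any remaining advantage is carried by $C_2^{*}$ alone, which I would bound by the IND-CCA2 advantage against $(\mathrm{Enc}, \mathrm{Dec})$.

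The hard part, as in every CCA2 argument, will be simulating the decryption oracle without the private key $d_B$ while not leaking $b$. For each query $(C_1', C_2', e_A')$ the reduction scans the random-oracle table for a recorded value $\sigma$ consistent with both branches, namely $M' = C_1' - \sigma$ and $M'' = \mathrm{Dec}_{H(\sigma)}(C_2')$, and returns the plaintext only when the scheme's built-in check $M' = M''$ succeeds. Two points need care: first, a query for which $\mathcal{A}$ never asked $H$ at the governing secret passes the consistency check only with probability about $2^{-n}$, so rejecting all such queries is statistically indistinguishable from honest decryption and costs at most $(q_D + q_H)2^{-n}$; second, the challenge ciphertext must stay non-decryptable in Phase 2, which holds because any mauled ciphertext reusing $s^{*}$ would itself trigger the forbidden oracle query already charged to Game 1. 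I also expect to flag one subtlety: the one-time-pad hiding of $C_1^{*}$ strictly needs $s^{*}$ to be computationally uniform mod $p$, a decisional strengthening of the search-type BFHP, so if one insists on the bare search assumption the confidentiality must be attributed to $C_2^{*}$ with $C_1^{*}$ demoted to its integrity role.

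Collecting the game-to-game bounds gives
\begin{equation}
\mathrm{Adv}_{\mathcal{A}}^{\mathrm{cca2}} \le \mathrm{Adv}_{\mathcal{B}}^{\mathrm{bfhp}} + \mathrm{Adv}_{\mathrm{Enc}}^{\mathrm{cca2}} + \frac{q_D + q_H}{2^{n}},
\end{equation}
where $q_D$ and $q_H$ count decryption and hash queries. Each term on the right is negligible under the stated assumptions, whence $\mathcal{A}$'s advantage is negligible and the scheme is IND-CCA2 secure.
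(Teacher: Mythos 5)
Your proposal takes a genuinely different route from the paper's own argument (the paper offers only a two-sentence sketch claiming that tampering with $C_1$ trips the consistency check $M'=M$), and the game-based random-oracle template you set up is the standard, correct way to structure such a claim. Unfortunately the proof collapses at its foundation: the reduction $\mathcal{B}$ in your Game 1 cannot exist, because your premise that recovering $s^{*}=e_{AB}$ from $(e_A,e_B,p,G_1,G_2)$ is ``a Diffie--Hellman-type instance of the BFHP guaranteed by Proposition 1'' is false. Proposition 1 concerns recovering the \emph{integer} pair $(u,v)$ from $A_1u+A_2v$ over $\mathbb{Z}$; but encryption and decryption only ever use values modulo $p$, and modulo $p$ the BFHP structure collapses into public linear algebra. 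Concretely, set $w=g_1g_3+g_2g_4$. The key-generation equations give $e_A\equiv d_A w \pmod{p}$ and $e_B\equiv d_B w \pmod{p}$, hence
\begin{equation}
s^{*}=e_{AB}\equiv d_A d_B w\equiv e_A\,e_B\,w^{-1} \pmod{p},
\end{equation}
where $w^{-1}\bmod p$ exists unless $w$ happens to be a multiple of $p$. So \emph{any} adversary can compute $s^{*}$ from public data in polynomial time; the abort event of your Game 1 occurs with probability essentially $1$ for a trivial adversary and can never be bounded by a BFHP advantage. Indeed the proposition you are trying to prove is false: the scheme is not even one-way, since an eavesdropper recovers $M=C_1-s^{*}\bmod p$ directly. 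The paper's lemma that recovering the secret parameters $(\alpha_1,\alpha_2)$ as integers from $e_A$ is a BFHP is true but irrelevant here --- computing the shared secret does not require those integers, and your proof silently conflates the two problems.

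Two further gaps, secondary by comparison. First, your decryption-oracle simulation accepts any $\sigma$ in the random-oracle table that passes the consistency check, but such a $\sigma$ need not equal the true secret $d_B e_A' \bmod p$ for the queried $e_A'$: an adversary can deliberately manufacture such ciphertexts (pick $\sigma$, set $C_1'=m+\sigma \bmod p$ and $C_2'=Enc_{H(\sigma)}(m)$), so your simulator accepts where the real oracle would reject. Repairing this in ElGamal-type hybrids requires a gap or oracle-DH-style assumption, not the bare search problem. Second, your own flag about search-versus-decisional hardness is well taken, but your proposed repair (demoting $C_1^{*}$ to an ``integrity role'') does not work: $C_1^{*}=M_b+s^{*}\bmod p$ is handed to the adversary regardless of what role you assign it, so without pseudorandomness of $s^{*}$ modulo $p$ it can leak $M_b$ outright --- and by the computation above, it does so completely.
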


\begin{proof}
This is a sketch. Any change to $C_{1}$ would result in the
decrypted value from $C_{1}$ which is $M'$ which would differ from
$M=Dec_{\textit{sk}}(C_{2})$ with high probability. Hence,
abort.$\blacksquare$
\end{proof}

\begin{lemma}
The problem of determining the secret parameters either from $e_{A}$
or $e_{B}$ is a BFHP.
\end{lemma}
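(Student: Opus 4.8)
The plan is to show that each of the two public keys is literally an instance of the bivariate form $G(u,v)=A_{1}u+A_{2}v$ analysed in Proposition 1, and then to invoke that proposition directly. First I would rewrite $e_{A}=g_{3}\alpha_{1}+g_{4}\alpha_{2}$ by setting $A_{1}=g_{3}$, $A_{2}=g_{4}$, $u=\alpha_{1}$ and $v=\alpha_{2}$, so that $e_{A}=G(\alpha_{1},\alpha_{2})$; recovering the secret pair $(\alpha_{1},\alpha_{2})$ from the published $e_{A}$ is then exactly the problem of inverting $G$. The same substitution with $A_{1}=g_{1}$, $A_{2}=g_{2}$, $u=\beta_{1}$, $v=\beta_{2}$ turns $e_{B}=g_{1}\beta_{1}+g_{2}\beta_{2}$ into $G(\beta_{1},\beta_{2})$, so the two cases are symmetric and only one needs detailed treatment.

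Next I would verify that the hypotheses of Proposition 1 are met. The coefficients $g_{3},g_{4}$ (resp. $g_{1},g_{2}$) are public parameters chosen coprime and lying in $(2^{n-1},2^{n}-1)$, which supplies both $\gcd(A_{1},A_{2})=1$ and the requirement $A_{1},A_{2}\approx 2^{n}$. The secret integers satisfy $\alpha_{1},\alpha_{2}\in(2^{2n-1},2^{2n}-1)$ by their construction in Step 5 of key generation, so the role of the interval $(2^{m-1},2^{m}-1)$ is played by $m=2n$. Substituting into the infeasibility criterion of Proposition 1 gives $m-n-1=2n-n-1=n-1$, so provided the common bit-length is chosen with $n-1\ge 129$ (that is, $n\ge 130$), the hypothesis $m-n-1\ge 129$ holds. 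Proposition 1 then yields both that $(\alpha_{1},\alpha_{2})$ is infeasible to recover from $e_{A}$ and that it is unique with high probability, and likewise for $(\beta_{1},\beta_{2})$ and $e_{B}$.

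The step I expect to require the most care is arguing that recovering the secret parameters genuinely collapses to the generic BFHP, rather than to some easier problem created by the extra algebraic structure in the scheme. Each secret integer is built as $\alpha_{i}=\tilde{\alpha}_{i}+k_{Ai}p$ with $\tilde{\alpha}_{i}\equiv g_{i}d_{A}\,(\textrm{mod }p)$, so an adversary might hope to exploit this hidden modular relationship together with knowledge of $p$ and the $g_{i}$. I would argue that, from the published data alone, the pair $(\alpha_{1},\alpha_{2})$ is indistinguishable from an arbitrary pair in $(2^{2n-1},2^{2n}-1)$ satisfying the single linear equation $e_{A}=g_{3}\alpha_{1}+g_{4}\alpha_{2}$, so that no information beyond that equation is available and the inversion problem is precisely the BFHP instance identified above. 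Dispatching the two symmetric cases then completes the proof.
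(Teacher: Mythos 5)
Your proposal is correct and follows essentially the same route as the paper's own (very terse) proof: identify $e_{A}$ and $e_{B}$ as instances of $G(u,v)=A_{1}u+A_{2}v$ from Proposition 1, obtain uniqueness from the coprimality of the $g_{i}$, and obtain infeasibility from the $n$-bit gap between the secret parameters ($m=2n$ bits) and the public coefficients ($n$ bits). You are in fact more careful than the paper: your requirement $n\ge 130$ is what the hypothesis $m-n-1\ge 129$ of Proposition 1 actually demands (the paper asserts one can set $n=128$, which falls short of its own threshold), and your closing concern about whether the hidden relations $\alpha_{i}=\tilde{\alpha}_{i}+k_{Ai}p$ with $\tilde{\alpha}_{i}\equiv g_{i}d_{A}\ (\textrm{mod }p)$ leak extra information is a genuine issue that the paper's proof never addresses.
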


\begin{proof}
It is obvious that with high probability ($\approx 2^{-n}$ - since
each $g_{i}$ are co-prime to each other) that the secret parameters
in either $e_{A}$ or $e_{B}$ are unique. Also, the difference
between the secret and public parameters are $n$-bits, one can set
$n=128$.$\blacksquare$
\end{proof}

\section{Table of Comparison}
Let $|E|$ denote public key size. Let $|M|$ denote the message size.
For RSA and ECC we utilize its parameters within its IND-CCA2
design. In determining the ciphertext size $|C|$ we also included
the public keys to be transmitted (where applicable). Complexity
time is taken in base case scenario deployed via the Fast Fourier
Transform (FFT).
\begin{center}
\begin{tabular}{|c|c|c|c|c|c|}
  \hline
  Algorithm & Encryption & Decryption & Ratio  & Ratio & Remark\\
   & Speed & Speed & $|M|:|C|$ & $|M|:|E|$ & \\
  \hline
  RSA & $O(n ^{2}\textrm{ }log\textrm{ }n)$ & $O(n ^{2}\textrm{ }log\textrm{ }n)$ & $1:2$ & $1:2$ & 2 parameter
  ciphertext\\
   & & & & & of $n$-bits each\\
  \hline
  ECC & $O(n ^{2}\textrm{ }log\textrm{ }n)$ & $O(n ^{2}\textrm{ }log\textrm{ }n)$ & $1:3$ & $1:2$ & 2 parameter ciphertext of $n$-bits each \\
  & & & & & + 1 $n$-bit public key\\
  \hline
  NTRU & $O(n\textrm{ }log\textrm{ }n)$ & $O(n\textrm{ }log\textrm{ }n)$ & Varies \cite{hoffstein2} & N/A & \\
  \hline
  $\textrm{This work}$ & $O(n\textrm{ }log\textrm{ }n)$ & $O(n\textrm{ }log\textrm{ }n)$ & $1:5$ & $1:3$  & 2 parameter ciphertext of $n$-bits \\
   & & & & & each + 1 $3n$-bit public key\\
  \hline
\end{tabular}
\end{center}

\begin{center}
{Table 1. Comparison table for input block of length $n$}
\end{center}

\section{Conclusion}
We conclude this work by stating that an efficient asymmetric
algorithm has been disclosed. By having complexity order of
$O(n\textrm{ }log\textrm{ }n)$ for both encryption and decryption,
it would cut $\approx \frac{2}{3}$ of the running time of algorithms
that do not achieve this speed. Furthermore, it achieves IND-CCA2
security.

\end{document}